\documentclass[11pt]{article}

\usepackage{amsmath, amssymb, amsthm}
\usepackage{geometry}
\usepackage{graphicx}
\newtheorem{theorem}{Theorem}
\newtheorem{remark}{Remark}
\newtheorem{corollary}{Corollary}
\newtheorem{definition}{Definition}
\title{Unified Operator Framework for Functional and Multivariate Regression}
\author{
Mark Carpenter and Nicholas Gaubatz\\[0.5em]
Department of Mathematics and Statistics\\
Auburn University\\
Auburn, AL
}
\date{}

\begin{document}

\maketitle

\begin{abstract}
We develop a unified operator framework for scalar, multivariate, and functional regression based on integral operators defined with respect to general measures. Within this framework, classical regression models, including scalar-on-function, function-on-scalar, function-on-function, and multivariate multiple regression, arise as special cases corresponding to different choices of input and output measures. We establish three main results. First, we show that the standard regression taxonomy can be expressed as a single operator under varying measures. Second, we demonstrate that discrete representations correspond to exact operator evaluations under discrete measures and converge to the continuous operator as the observation grid is refined. Third, we show that estimation under the discrete-measure formulation reduces to standard multivariate regression, with statistical properties governed by classical results. A simulation study illustrates these principles, highlighting the roles of discretization, conditioning, and estimation. Overall, the proposed framework clarifies the relationship between functional and multivariate regression and provides a meaningful interpretation of discretized  modeling approaches as operator estimation under different measure specifications. This perspective also explains why vectorized multivariate regression is often competitive with functional methods in linear settings: it directly estimates the discrete-measure representation of the underlying operator.
\end{abstract}

\vspace{0.5em}
\noindent \textbf{Keywords:} Functional data analysis; integral operators; Hilbert spaces; measure-theoretic methods; multivariate regression; discretization; operator estimation

\section{Introduction}

Functional data analysis (FDA) provides a framework for modeling relationships between functional predictors and responses. Classical treatments organize regression models into distinct categories, including scalar-on-function, function-on-scalar, function-on-function, and multivariate regression, each with its own estimation methods and theoretical guarantees (Ramsay and Silverman, 2005; Hsing and Eubank, 2015; Kokoszka and Reimherr, 2017). While this taxonomy has been useful for organizing the literature, it reflects differences in representation rather than fundamental differences in model structure.

Although these models are often presented separately, their mathematical forms already suggest a common underlying structure: each relates predictors and responses through linear functionals or operators acting on functions.

A prototypical example is scalar-on-function regression, in which a scalar response is related to a functional predictor through
\begin{equation}
Y
=
\alpha
+
\int_S \beta(s)X(s)\,ds
+
\varepsilon,
\label{eq:sof_intro}
\end{equation}
where \(X(s)\) is a functional predictor and \(\beta(s)\) is a coefficient function describing how different regions of the predictor contribute to the scalar response.

More generally, both the predictor and response may be functional. In its classical linear form, function-on-function regression is written as
\begin{equation}
Y(t)
=
\alpha(t)
+
\int_S \beta(s,t)X(s)\,ds
+
\varepsilon(t),
\label{eq:fof_intro}
\end{equation}
where \(X(s)\) is a functional predictor, \(Y(t)\) is a functional response, and \(\beta(s,t)\) is a coefficient surface describing how values of the predictor at location \(s\) contribute to the response at location \(t\). Scalar-on-function, function-on-scalar, and multivariate regression models can be viewed as lower-dimensional or discretized versions of this same general structure.

The existing literature already contains indications that these models are not fundamentally separate and repeatedly points toward a deeper commonality among them. In practice, functional regression methods are almost always implemented through discretized or finite-dimensional representations, while theoretical developments often continue to treat the resulting models as distinct categories. Discretely observed functional data are typically treated as approximations to underlying continuous functions, with theoretical analysis focusing on discretization error (Hsing and Eubank, 2015). It has also been noted that scalar and multivariate responses can be embedded within a functional framework (Ramsay and Silverman, 2005). More recently, operator-based formulations have been proposed (Zhou, Lai, and Kong, 2023), highlighting a common structural representation for functional regression models. However, these approaches are generally developed under continuous-measure settings and do not explicitly establish equivalence with multivariate regression arising from discretized functional observations.

In this paper, we propose a unified operator framework in which regression is modeled as an integral operator acting between $L^2$ spaces defined with respect to general measures. Under this perspective, the apparent distinctions between scalar, multivariate, and functional regression arise from differences in the underlying measures rather than differences in model structure.

Within this framework, we establish three main results. First, we show that the standard regression taxonomy can be expressed as a single operator under different choices of input and output measures. Second, we demonstrate that discrete weighted computations correspond to exact evaluations of the operator under a discrete measure, rather than approximations of a continuous model, and that these discrete representations converge to the continuous operator as the observation grid is refined. Third, we show that estimation under the discrete-measure formulation reduces to standard multivariate regression, with statistical properties governed by classical results.

This perspective provides a meaningful interpretation of functional and multivariate regression and clarifies the role of discretization. In particular, it shows that differences between commonly used methods arise from estimation strategy, conditioning, and measure specification, rather than from fundamentally different regression models.

Before presenting the unifying result, we introduce the notation needed to describe regression models under different choices of measure.

\section{Setup and Notation}

 The purpose of this setup is not to develop measure theory in full generality, but to provide a common language in which continuous functional models, discretized functional observations, and multivariate regression can be written as instances of the same operator. In this framework, the measure determines whether integration corresponds to a continuous integral, a weighted sum, or a combination of the two.

Let $(S,\mathcal{B}_S,\mu_X)$ and $(T,\mathcal{B}_T,\mu_Y)$ be measurable spaces equipped with $\sigma$-finite measures $\mu_X$ and $\mu_Y$. Here, $S$ and $T$ denote the domains of the predictor and response functions, respectively, while $\mathcal{B}_S$ and $\mathcal{B}_T$ denote collections of measurable subsets on those domains. The measures $\mu_X$ and $\mu_Y$ specify how functions are integrated and compared over $S$ and $T$. The domains \(S\) and \(T\) may be continuous, discrete, or mixed, depending on whether the corresponding variables are treated as functions, vectors, or scalar quantities.

A Hilbert space is a vector space with an inner product that allows functions to be treated geometrically, much like vectors in ordinary Euclidean space.  Define the Hilbert spaces {\small
\[
L^2(\mu_X)
=
\left\{
f : S \to \mathbb{R}
\;\bigg|\;
\int_S [f(s)]^2 \, d\mu_X(s) < \infty
\right\} \mbox{ and }
L^2(\mu_Y)
=
\left\{
g : T \to \mathbb{R}
\;\bigg|\;
\int_T [g(t)]^2 \, d\mu_Y(t) < \infty
\right\}.
\] }Intuitively, the spaces $L^2(\mu_X)$ and $L^2(\mu_Y)$ consist of square-integrable functions under the corresponding measures; that is, functions whose squared values have finite integral. These spaces provide a geometric setting in which functions can be added, projected, and compared using inner products, much like vectors in ordinary Euclidean space. The Hilbert space structure allows regression relationships between functions to be expressed through linear operators and projections. For general background on Hilbert spaces and functional analysis, see Kreyszig (1978) or Conway (1990). These spaces are equipped with the inner products
\begin{equation}
\langle f_1, f_2 \rangle_{L^2(\mu_X)}
=
\int_S f_1(s) f_2(s)\, d\mu_X(s),
\qquad \text{and} \qquad
\langle g_1, g_2 \rangle_{L^2(\mu_Y)}
=
\int_T g_1(t) g_2(t)\, d\mu_Y(t),
\label{eq:innerproducts}
\end{equation}
which generalize the usual Euclidean inner product to functional settings.

Let $\beta \in L^2(\mu_X \times \mu_Y)$ define the operator
\begin{equation}
(T_\beta X)(t)
=
\int_S \beta(s,t)\, X(s)\, d\mu_X(s),
\label{eq:operator}
\end{equation}
mapping functions in $L^2(\mu_X)$ to functions in $L^2(\mu_Y)$.  For simplicity of notation, we suppress intercept terms in the operator formulation. Intercepts may be incorporated separately, as in the classical function-on-function model \eqref{eq:fof_intro}, without affecting the operator representation developed in this paper.

We consider the regression model
\begin{equation}
Y(t)
=
(T_\beta X)(t) + \varepsilon(t),
\label{eq:model}
\end{equation}
where $\varepsilon(t)$ denotes a residual process.

A discrete measure has the form
\begin{equation}
\mu_X = \sum_{i=1}^{p} w_i \delta_{s_i},
\label{eq:discrete}
\end{equation}
where $\delta_{s_i}$ denotes a Dirac measure concentrated at the point $s_i$ and $w_i > 0$ are weights.

Under this specification,
\begin{equation}
\int_S f(s)\, d\mu_X(s)
=
\sum_{i=1}^{p} w_i f(s_i).
\label{eq:sum}
\end{equation}

Of course, when $\mu_X$ is Lebesgue measure on an interval, the integral reduces to the standard continuous integral
\[
\int_S f(s)\, d\mu_X(s)
=
\int_S f(s)\, ds,
\]
which may be interpreted in the usual Riemann or Lebesgue sense for sufficiently regular functions.

The use of general measures allows continuous, discrete, and mixed representations to be treated within a common notation. When $\mu_X$ is Lebesgue measure, integrals take the usual continuous form; when $\mu_X$ is discrete, integrals reduce to weighted sums. Thus, the measure determines how functions are aggregated over their domains. For a rigorous treatment of integration with respect to general measures, see Cohn (1980) or Rudin (1987).

\section{Unified Operator Framework}

Classical FDA literature typically treats scalar, multivariate, and functional regression as distinct model classes (Ramsay and Silverman, 2005; Hsing and Eubank, 2015). Operator-based formulations have also been proposed; for example, Zhou, Lai, and Kong (2023) formulate functional regression through a linear operator framework. However, these formulations are generally developed in continuous-measure settings and do not explicitly establish equivalence with multivariate regression arising from discretized functional observations.

The central idea of this work is that regression can be formulated as an integral operator acting between $L^2$ spaces defined with respect to general measures. Under this perspective, the apparent distinctions between scalar, multivariate, and functional regression arise from differences in the underlying measures rather than differences in model structure. The following theorem demonstrates that this framework recovers the standard regression taxonomy as special cases.

\begin{theorem}[Operator--Measure Unification of Functional and Multivariate Regression]

Let $(S,\mathcal{B}_S,\mu_X)$ and $(T,\mathcal{B}_T,\mu_Y)$ be measurable spaces equipped with $\sigma$-finite measures $\mu_X$ and $\mu_Y$. Let $\beta \in L^2(\mu_X \times \mu_Y)$ define the operator in \eqref{eq:operator}, and consider the regression model in \eqref{eq:model}.

Then the following regression models arise as special cases obtained by specifying the measures $\mu_X$ and $\mu_Y$:

\begin{enumerate}

\item \textbf{Function-to-function regression.}  
If $\mu_X = \lambda_S$ and $\mu_Y = \lambda_T$, then
\[
Y(t) = \int_S \beta(s,t)\, X(s)\, ds + \varepsilon(t).
\]

\item \textbf{Scalar-on-function regression.}  
If $\mu_Y = \delta_{t_0}$ for some fixed $t_0 \in T$, then
\[
Y = \int_S \beta(s,t_0)\, X(s)\, ds + \varepsilon.
\]

\item \textbf{Function-on-vector regression.}
If $
\mu_X=\sum_{i=1}^{p}\delta_{s_i},$ and
$\mu_Y=\lambda_T,
$ 
then
\[
Y(t)
=
\sum_{i=1}^{p}\beta(s_i,t)X(s_i)+\varepsilon(t),
\]
which corresponds to regression of a functional response on a finite-dimensional vector of predictors. The classical function-on-scalar model arises as the special case \(p=1\).

\item \textbf{Multivariate multiple regression (discretized functional regression).}  
If
\[
\mu_X = \sum_{i=1}^{p} w_i \delta_{s_i}, \qquad
\mu_Y = \sum_{j=1}^{q} v_j \delta_{t_j},
\]
then for each $t_j$,
\[
Y(t_j) = \sum_{i=1}^{p} w_i \beta(s_i,t_j)\, X(s_i) + \varepsilon(t_j),
\quad j = 1,\dots,q,
\]
which yields the matrix form
\[
\mathbf{Y} = \mathbf{X} B + \mathbf{E}.
\]

\item \textbf{Scalar-to-scalar regression.}  
If $\mu_X = \delta_{s_0}$ and $\mu_Y = \delta_{t_0}$, then
\[
Y = \beta(s_0,t_0)\, X + \varepsilon.
\]

\end{enumerate}

\end{theorem}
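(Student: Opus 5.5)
The plan is to verify each of the five cases by direct substitution into the operator \eqref{eq:operator} and the model \eqref{eq:model}. The only tool needed is the integration identity \eqref{eq:sum} for discrete measures, together with its Lebesgue counterpart. In every case I will proceed in two steps. First, I check that the hypotheses of the theorem hold under the chosen measures: each of $\lambda_S$, $\lambda_T$, $\delta_{t_0}$, and the finite sums of weighted Dirac masses is $\sigma$-finite. Second, I evaluate $(T_\beta X)(t)$ and read off the resulting model. Here $\lambda_S,\lambda_T$ denote Lebesgue measure on subsets of Euclidean space, and I will state that assumption explicitly.

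\textbf{Cases 1 and 2.} For case 1, taking $\mu_X=\lambda_S$ makes the integral in \eqref{eq:operator} the ordinary integral $\int_S\beta(s,t)X(s)\,ds$, so \eqref{eq:model} is exactly \eqref{eq:fof_intro} without intercept. To show this is well defined, I will use Cauchy--Schwarz and Fubini to get
\[
\|T_\beta X\|_{L^2(\mu_Y)}\le\|\beta\|_{L^2(\mu_X\times\mu_Y)}\|X\|_{L^2(\mu_X)}.
\]
This bound holds for every choice of measures, so I prove it once and reuse it in all five cases. For case 2, I interpret the unstated input measure as $\mu_X=\lambda_S$. Since $\mu_Y=\delta_{t_0}$, the space $L^2(\delta_{t_0})$ is isometric to $\mathbb{R}$ via $g\mapsto g(t_0)$. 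Equality $\delta_{t_0}$-a.e.\ means equality at $t_0$, so the model reduces to the single scalar equation with $Y:=Y(t_0)$ and $\varepsilon:=\varepsilon(t_0)$. Membership $\beta\in L^2(\lambda_S\times\delta_{t_0})$ is the same as $\beta(\cdot,t_0)\in L^2(\lambda_S)$.

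\textbf{Cases 3, 4 and 5.} For case 3, applying \eqref{eq:sum} with unit weights gives $(T_\beta X)(t)=\sum_i\beta(s_i,t)X(s_i)$. Here $L^2(\mu_X)\cong\mathbb{R}^p$, so the predictor is the vector $(X(s_1),\dots,X(s_p))$, and $p=1$ gives function-on-scalar regression. For case 4, I again apply \eqref{eq:sum}, with weights $w_i$, to get the displayed equations, and $L^2(\mu_Y)\cong\mathbb{R}^q$ by evaluation at the points $t_j$. For the matrix form I stack $n$ observations. The rows of $\mathbf X$ are $(X_k(s_1),\dots,X_k(s_p))$ and the rows of $\mathbf Y$ are $(Y_k(t_1),\dots,Y_k(t_q))$. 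I then set $B_{ij}=w_i\beta(s_i,t_j)$, which makes the identity entrywise. Case 5 is case 4 with $p=q=1$ and unit weights.

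\textbf{Main obstacle.} The only delicate point is the meaning of pointwise evaluation. $L^2$ elements are equivalence classes, so I must justify that $\beta(s_i,t_j)$ and $X(s_i)$ are meaningful. This is where I would be careful. For discrete measures every atom has positive mass, so each equivalence class is determined by its values on the atoms, and evaluation is well defined. In the mixed cases I will argue slicewise via Fubini's theorem applied to the product measure.
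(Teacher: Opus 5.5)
Your proposal is correct and follows essentially the same route as the paper: case-by-case substitution of the specified measures into \eqref{eq:operator}, with \eqref{eq:sum} handling the discrete cases and stacking with $B_{ij}=w_i\beta(s_i,t_j)$ giving the matrix form. Your additions are the Hilbert--Schmidt bound $\|T_\beta X\|_{L^2(\mu_Y)}\le\|\beta\|_{L^2(\mu_X\times\mu_Y)}\|X\|_{L^2(\mu_X)}$, making $\mu_X=\lambda_S$ explicit in case 2, and justifying evaluation at atoms of positive mass; these are well-definedness checks the paper's proof leaves implicit, not a different argument.
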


\begin{proof}
We verify each case explicitly by substituting the corresponding measure into \eqref{eq:operator}.

\textbf{Function-to-function.} If $\mu_X = \lambda_S$ and $\mu_Y = \lambda_T$, then \eqref{eq:operator} reduces to the standard Lebesgue integral, yielding
\[
(T_\beta X)(t) = \int_S \beta(s,t) X(s)\, ds,
\]
which is the classical function-on-function regression model.

\textbf{Scalar-on-function.} If $\mu_Y = \delta_{t_0}$, evaluating at $t_0$ gives
\[
Y = \int_S \beta(s,t_0) X(s)\, ds + \varepsilon,
\]
which is the scalar-on-function model.

\textbf{Function-on-vector.} If $\mu_X = \sum_i \delta_{s_i}$, then using \eqref{eq:sum},
\[
(T_\beta X)(t) = \sum_i \beta(s_i,t) X(s_i),
\]
which corresponds to regression of a functional response on a finite-dimensional vector of predictors.

\textbf{Multivariate regression.} If both $\mu_X$ and $\mu_Y$ are discrete, then applying \eqref{eq:sum} yields
\[
Y(t_j) = \sum_i w_i \beta(s_i,t_j) X(s_i),
\]
which is exactly the multivariate multiple regression model.

\textbf{Scalar regression.} If both measures are Dirac measures, the result follows immediately.

\end{proof}

\begin{corollary}[Common Estimand]
An immediate implication of Theorem 1 is that classical FDA methods, multivariate regression, and modern machine learning approaches can be viewed as estimators of the same underlying operator $\beta$. Differences in empirical performance should therefore be interpreted as differences in estimation strategy, rather than differences in model specification.
\end{corollary}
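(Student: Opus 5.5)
The statement to address is the Corollary (Common Estimand). It is interpretive rather than a quantitative claim, so the plan is to make its mathematical content precise and then derive it from Theorem 1. The precise form I would prove is as follows. Fix the kernel $\beta\in L^2(\mu_X\times\mu_Y)$ defining $T_\beta$ in \eqref{eq:operator}. Each regression model in the taxonomy, together with its usual estimator, targets a specific evaluation of this single $\beta$ against the chosen measures. Write $\mu_X=\sum_i w_i\delta_{s_i}$ and $\mu_Y=\sum_j v_j\delta_{t_j}$. Under these, the population parameter of any estimator is a function of $\beta$ restricted to the support of $\mu_X\times\mu_Y$ (up to known weights). In particular, the multivariate coefficient matrix $B$ in case 4 of Theorem 1 has entries $B_{ij}=w_i\beta(s_i,t_j)$.

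\textbf{Step 1: identify the coefficient in each case.} I would go through the cases of Theorem 1 and read off the coefficient.
\begin{itemize}
\item Case 4: $\mathbf{Y}=\mathbf{X}B+\mathbf{E}$ with $B_{ij}=w_i\beta(s_i,t_j)$.
\item Case 3: the coefficient functions are $\beta(s_i,\cdot)$.
\item Case 2: the coefficient is $\beta(\cdot,t_0)$.
\item Case 5: the coefficient is the single value $\beta(s_0,t_0)$.
\item Case 1: the coefficient is $\beta$ itself.
\end{itemize}
Each quantity is the same function $\beta$ evaluated on a different set determined by the measures. Hence any estimator of the model parameter in any case, whether an FDA basis-expansion estimator, ordinary least squares for $B$, or a generic learned linear map, has as estimand a known functional of $\beta$. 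Concretely, $\widehat{B}$ estimates the restriction $\beta|_{\mathrm{supp}\,\mu_X\times\mathrm{supp}\,\mu_Y}$, rescaled by $w_i$. Because $w_i>0$, this rescaling is invertible.

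\textbf{Step 2: show the model class is identical.} I would argue that the model class is the same across methods. The set of admissible regression functions is $\{T_\beta:\beta\in L^2(\mu_X\times\mu_Y)\}$ in every case. With discrete measures this set is exactly the set of all $p\times q$ matrices, since every matrix arises from some $\beta$ and the weights are invertible. Two procedures applied to data from the same measure specification therefore share both the model and the estimand. Any discrepancy in their outputs must come from the map data $\mapsto\widehat\beta$, which is the estimation strategy: penalization, basis truncation, or implicit regularization.

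\textbf{Main obstacle.} The hard step is comparing a continuous-measure FDA estimator with a discrete-measure multivariate one. Their estimands, $\beta$ on $S\times T$ versus $\beta$ on the grid, are not literally the same object. Here I would only claim that they are the same $\beta$ viewed through different measures, and that the discrete estimand determines the continuous one in the limit. That limit claim is deferred to the paper's later convergence result, that discrete representations converge as the grid is refined, which I would cite rather than prove here. The term ``machine learning approaches'' I would restrict to estimators of linear maps, so that they fall under the same class $T_\beta$.
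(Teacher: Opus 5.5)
Your Steps 1--2 follow the paper's route, made explicit. The paper gives no argument beyond reading the corollary off the cases of Theorem 1. Within a single discrete specification your identification of the estimand $B_{ij}=w_i\beta(s_i,t_j)$ and of the model class with $\mathbb{R}^{p\times q}$ is correct. This requires that the data actually satisfy $\mathbf{y}=B^\top\mathbf{x}+\boldsymbol{\varepsilon}$ with $\mathbb{E}[\boldsymbol{\varepsilon}\mid\mathbf{x}]=0$ and $\mathbb{E}[\mathbf{x}\mathbf{x}^\top]$ nonsingular.

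The gap is in the cross-measure step. That step is what the corollary needs to put continuous-measure FDA methods and grid-based multivariate regression on the same footing. The paper never addresses it, and your patch does not close it, for two reasons.
\begin{enumerate}
\item The phrase ``the same $\beta$ evaluated on different sets'' needs a single pointwise-defined kernel. Fixing $\beta\in L^2(\mu_X\times\mu_Y)$ for one pair of measures does not supply one. An element of $L^2(\lambda_S\times\lambda_T)$ has no values at grid points. You must assume something like $\beta$ continuous on $S\times T$. Once you do, ``grid values determine $\beta$ in the limit'' is a continuity-plus-density fact, not Theorem~2.
\item Theorem~2 cannot carry the claim you defer to it. It is a deterministic statement, $T_\beta^{(m)}X\to T_\beta X$ for a fixed kernel and a fixed input. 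It says nothing about what an estimator fitted to gridded data actually targets.
\end{enumerate}

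Here is what goes wrong concretely. Suppose the data come from the continuous model, as the FDA methods assume and as the paper's own simulation does. On the grid, $Y(t_j)=\mathbf{x}^\top b_j+r_j(X)+\varepsilon(t_j)$. Here $b_j=(w_i\beta(s_i,t_j))_i$ and $r_j(X)=\int_S\beta(s,t_j)X(s)\,ds-\sum_i w_i\beta(s_i,t_j)X(s_i)$ is the quadrature remainder. This remainder is a linear functional of $X$ and is in general correlated with $\mathbf{x}$. So the conditional-mean-zero assumption of Theorem~3 fails. For mean-zero $X$, OLS then targets $b_j+\Sigma_{\mathbf{x}\mathbf{x}}^{-1}\mathrm{Cov}(\mathbf{x},r_j)$ rather than $b_j$. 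The correction can be large, because $\Sigma_{\mathbf{x}\mathbf{x}}^{-1}$ is badly conditioned for smooth functional data. The target is not identified at all when $\Sigma_{\mathbf{x}\mathbf{x}}$ is singular, e.g.\ for a $K$-term Karhunen--Lo\`eve predictor with $p>K$, as in the paper's simulation. So at the coefficient level the estimands genuinely differ.

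The statement can be rescued at the operator level, and this is the missing step. The prediction discrepancy $\mathbf{x}^\top\Sigma_{\mathbf{x}\mathbf{x}}^{-1}\mathrm{Cov}(\mathbf{x},r_j)$ is the $L^2(P)$-projection of $r_j$ onto $\mathrm{span}(\mathbf{x})$. Hence its mean square is at most $\mathbb{E}[r_j^2]$. A Theorem~2-type argument, applied pointwise in $t$ and integrated over the law of $X$ under a moment or domination condition, then drives $\mathbb{E}[r_j^2]\to0$ as the grid is refined.
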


\begin{remark}[Scope: vectorized functional observations]
Theorem 1 does not assert that all multivariate regression problems arise from functional data. Rather, it formalizes the common practice of vectorizing functional observations: when predictor and response functions are observed on finite grids, their sampled values may be treated as vectors and analyzed using multivariate regression. The theorem shows that such vectorized regression models are mathematically equivalent to functional regression models evaluated under discrete measures.
\end{remark}

\begin{remark}[Role of the measure]
The apparent differences between regression formulations arise from the choice of measure defining the domain. Continuous measures yield functional representations, while discrete measures yield finite-dimensional vector representations. Thus, the structure of the model is unchanged; only the underlying measure differs.
\end{remark}

\begin{remark}[Discretization as a measure]
For finite observation grids, discretization defines a measure rather than an approximation. If
\[
\mu_X = \sum_i w_i \delta_{s_i},
\]
then
\[
\int_S f(s)\, d\mu_X(s) = \sum_i w_i f(s_i).
\]
Thus, weighted sums arising from discretized functional data are exact integrals with respect to a discrete measure. They are not numerical approximations of Lebesgue integrals unless one explicitly considers a limiting regime in which the grid is refined.
\end{remark}

\medskip

Theorem 1 establishes the structural equivalence of regression models under different measures. We next examine how this operator is realized computationally under discretized observations.

\subsection{Scalar-on-Function Regression under Discrete Measures}

We begin with the scalar-on-function regression model introduced in \eqref{eq:sof_intro}. Under the operator framework developed above, this model may be written as
\[
Y
=
\int_S \beta(s)X(s)\,d\mu_X(s)
+
\varepsilon,
\]
where $\mu_X$ specifies the measure under which the predictor function is integrated. In practice, scalar-on-function regression is rarely implemented through direct regression on densely sampled functional observations due to strong correlation and ill-conditioning among neighboring grid points. Instead, FDA methods typically represent either the predictor function \(X(s)\), the coefficient function \(\beta(s)\), or both through finite-dimensional basis expansions such as splines, Fourier bases, wavelets, or functional principal components. Estimation then proceeds through regression on the resulting basis coefficients, often with additional smoothing or regularization. Under the present framework, these approaches may be viewed as alternative parameterizations and estimation strategies for the same underlying operator evaluated under different representations of the measure space.

It is important to distinguish between regression on discretized functional observations and regression on basis scores or functional principal component coefficients. Direct discretization regresses on the sampled values
\[
X(s_1),\dots,X(s_p),
\]
whereas basis-based FDA methods first represent the functional observations, the coefficient function, or both in a lower-dimensional basis. For example, one may write
\[
X_i(s)\approx \sum_{k=1}^K z_{ik}\phi_k(s),
\]
where \(\{\phi_k\}\) denotes a chosen functional basis and
\[
z_{ik}
=
\int_S X_i(s)\phi_k(s)\,ds
\]
is the coefficient or score of the \(i\)th function along the \(k\)th basis direction. Regression is then performed on the finite-dimensional scores \(z_{ik}\), rather than directly on the sampled grid values. Although estimation may proceed through basis coefficients or scores, the resulting fitted values may still be represented through weighted combinations of the original functional observations via the implied coefficient function. These approaches therefore correspond to different finite-dimensional representations of the same underlying operator.

Suppose now that the predictor function is observed on a finite grid
\[
s_1,\ldots,s_p \in S,
\]
and define the discrete measure
\[
\mu_X
=
\sum_{i=1}^{p} w_i \delta_{s_i},
\]
with weights $w_i > 0$.

Under this specification,
\[
Y
=
\int_S \beta(s)X(s)\,d\mu_X(s)
+
\varepsilon
=
\sum_{i=1}^{p} w_i \beta(s_i)X(s_i)
+
\varepsilon.
\]

Define the vectorized predictor
\[
\mathbf{x}
=
\begin{pmatrix}
X(s_1)\\
\vdots\\
X(s_p)
\end{pmatrix},
\]
and define the coefficient vector
\[
b_i = w_i \beta(s_i).
\]

Then the model becomes
\[
Y = \mathbf{x}^\top b + \varepsilon,
\]
which is precisely the standard linear regression model.

\medskip

Thus, scalar-on-function regression evaluated under a discrete measure reduces exactly to ordinary linear regression on the vectorized functional observations. In this representation, the coefficient vector $b$ is the discrete-measure representation of the coefficient function $\beta(s)$.

\medskip

When \(X^\top X\) is nonsingular, the ordinary least squares estimator of the discrete coefficient vector is
\[
\hat{b}
=
(X^\top X)^{-1}X^\top Y,
\qquad
\hat{b}\in\mathbb{R}^{p}.
\]

This estimator is the standard least-squares estimator applied to the vectorized functional observations. Under the discrete-measure representation, estimation therefore proceeds through familiar linear regression machinery. As in ordinary regression, correlation among the discretized predictor values primarily affects conditioning and multicollinearity, while standard least-squares properties hold under the usual assumptions on the residual term. Formal consistency under the discrete-measure representation is discussed in Section~5.

\medskip

This simple case already illustrates the central idea of the framework: once the predictor domain is equipped with a discrete measure, the resulting weighted sums are exact integrals under that measure rather than approximations to a continuous integral.

\subsection{From Operator to Multivariate Regression}

To make the connection between the operator formulation and standard multivariate regression explicit, consider the case where both the predictor and response functions are observed on finite grids. Let
\[
s_1,\ldots,s_p \in S, \qquad t_1,\ldots,t_q \in T,
\]
and define the corresponding discrete measures
\[
\mu_X = \sum_{i=1}^{p} w_i \delta_{s_i}, \qquad
\mu_Y = \sum_{j=1}^{q} v_j \delta_{t_j},
\]
with $w_i, v_j > 0$.

Under this specification, the operator model \eqref{eq:model} evaluated at the grid points $t_j$ becomes
\[
Y(t_j)
=
\int_S \beta(s,t_j)\, X(s)\, d\mu_X(s) + \varepsilon(t_j)
=
\sum_{i=1}^{p} w_i \beta(s_i,t_j)\, X(s_i) + \varepsilon(t_j),
\quad j = 1,\ldots,q.
\]

Define the vectorized observations
\[
\mathbf{x}
=
\begin{pmatrix}
X(s_1)\\
\vdots\\
X(s_p)
\end{pmatrix},
\qquad
\mathbf{y}
=
\begin{pmatrix}
Y(t_1)\\
\vdots\\
Y(t_q)
\end{pmatrix},
\qquad
\boldsymbol{\varepsilon}
=
\begin{pmatrix}
\varepsilon(t_1)\\
\vdots\\
\varepsilon(t_q)
\end{pmatrix}.
\]

Define the coefficient matrix
\[
B_{ij} = w_i \beta(s_i,t_j).
\]
Then the model can be written as
\[
\mathbf{y} = B^\top \mathbf{x} + \boldsymbol{\varepsilon}.
\]

For $n$ independent observations, stacking the data yields the standard multivariate multiple regression model
\[
Y = X B + E,
\]
where $X \in \mathbb{R}^{n \times p}$ and $Y \in \mathbb{R}^{n \times q}$.

\medskip

The weights $w_i$ are part of the measure specification and are not merely numerical integration artifacts. They represent the modeler's choice of how observations on the grid are weighted, and thus define the measure under which the operator is evaluated.

We incorporate the weights $w_i$ into the definition of $B$ for notational simplicity. Alternatively, one could define $B_{ij} = \beta(s_i,t_j)$ and treat the weights separately; the two representations are equivalent but have different implications for estimation and identifiability.

The output weights $v_j$ do not affect pointwise evaluation at the grid points, but become relevant when the response is integrated against test functions, for example in prediction error measured in $L^2(\mu_Y)$.

\medskip

\noindent
\textbf{Key interpretation.} The matrix $B$ is not a separate object from the functional coefficient surface $\beta(s,t)$; it is the discrete-measure representation of that surface evaluated on the input and output grids.

\medskip

This representation makes precise the sense in which multivariate regression arising from discretized (vectorized) functional observations and functional regression estimate the same underlying operator under different measure specifications.

When \(X^\top X\) is nonsingular, the ordinary least squares estimator of the discrete coefficient matrix is
\[
\hat{B}
=
(X^\top X)^{-1}X^\top Y,
\qquad
\hat{B}\in\mathbb{R}^{p\times q}.
\]

This estimator is the standard multivariate least-squares estimator applied to the vectorized functional observations. Under the discrete-measure representation, estimation therefore proceeds through familiar multivariate regression machinery, although statistical efficiency and inference depend on the covariance structure of the residual process \(E\). Formal consistency under this representation is discussed in Section~5.
\subsection{Connection to Neural Functional Models}

The discrete-measure formulation developed above also provides a useful perspective on
recent neural network approaches to functional regression.

Many neural functional models, including coefficient-based functional neural networks
(Thind, Multani, and Cao, 2023) and adaptive functional neural networks
(Yao, Mueller, and Wang, 2021), operate on discretized observations of functional inputs.
These models typically construct intermediate representations through inner products of the form
\[
\tilde{c}_k = \int_S \beta_k(s)\, X(s)\, ds,
\]
which, under discretization, are computed as
\[
\tilde{c}_k = \sum_{i=1}^{p} \beta_k(s_i)\, X(s_i).
\]
Here $\{s_i\}$ denotes the input grid.

In this formulation, the functions $\beta_k(s)$ represent learned functional weights (e.g., basis functions or neural network outputs), rather than slices of a coefficient surface indexed by an output grid.

Comparing this with the discrete-measure representation of the operator in Section 3.1,
\[
Y(t_j) = \sum_{i=1}^{p} w_i \beta(s_i,t_j)\, X(s_i),
\]
we see that neural functional models compute inner products that correspond to evaluations of integrals under a discrete input measure. Thus, the summations that arise in these architectures are exact evaluations of integrals under the discrete measure $\mu_X$.

This perspective suggests that neural functional models are not fundamentally different
from classical regression formulations, but rather represent alternative estimation
strategies for the same underlying operator. The distinction lies in how the coefficient
surface $\beta$ is parameterized and learned, rather than in the structure of the model
itself.

For architectures with explicit output-grid structure, such as neural operator models, a full coefficient surface $\beta(s,t)$ is represented explicitly, and the correspondence with the operator formulation in Theorem 1 becomes exact.

\section{Operator Representation under General Measures}

The preceding section established that regression models across functional, multivariate, and scalar settings arise from a common operator under different measure specifications. We now formalize how this operator is represented under discrete measures and how these representations relate to the continuous operator as the observation grid is refined.

\begin{definition}
A quadrature rule $\{s_i^{(m)}, w_i^{(m)}\}_{m \ge 1}$ on $(S,\lambda_S)$
is said to be \emph{consistent on a class} $\mathcal{F}$ of $\lambda_S$-integrable
functions if, for every $f \in \mathcal{F}$,
\[
\sum_{i=1}^{p_m} w_i^{(m)} f(s_i^{(m)})
\longrightarrow
\int_S f(s)\,d\lambda_S(s).
\]
\end{definition}

\begin{theorem}[Exact Discrete Representation and Continuous Convergence]

Let $S \subset \mathbb{R}$ be compact, and let $\lambda_S$ denote Lebesgue measure on $S$. Let $(T,\mathcal{B}_T,\mu_Y)$ be a measurable space equipped with a $\sigma$-finite measure $\mu_Y$. Let
\[
\mu_X^{(m)}
=
\sum_{i=1}^{p_m} w_i^{(m)} \delta_{s_i^{(m)}}
\]
be a sequence of discrete measures on $S$. Define the continuous operator
\[
(T_\beta X)(t)
=
\int_S \beta(s,t)X(s)\,d\lambda_S(s),
\]
and define the corresponding discrete-measure operators
\[
(T_\beta^{(m)}X)(t)
=
\int_S \beta(s,t)X(s)\,d\mu_X^{(m)}(s).
\]

Then:

\begin{enumerate}

\item \textbf{Exact discrete representation.}

For each fixed $m$,
\[
(T_\beta^{(m)}X)(t)
=
\sum_{i=1}^{p_m}
w_i^{(m)}
\beta(s_i^{(m)},t)
X(s_i^{(m)}).
\]

Thus, the weighted sums used in practice with discretized functional observations are exact integrals against $\mu_X^{(m)}$, with no approximation error at the discrete level.

\item \textbf{Convergence to the continuous operator.}

Suppose that, for $\mu_Y$-almost every $t\in T$, the function
\[
h_t(s)=\beta(s,t)X(s)
\]
belongs to a class $\mathcal{F}$ on which the quadrature rule
$\{s_i^{(m)},w_i^{(m)}\}$ is consistent; that is,
\[
\sum_{i=1}^{p_m}
w_i^{(m)}h_t(s_i^{(m)})
\longrightarrow
\int_S h_t(s)\,d\lambda_S(s).
\]

Assume further that there exists a measurable function
\[
H:S\times T\to [0,\infty)
\]
such that
\[
|\beta(s,t)X(s)|
\le
H(s,t)
\]
for all $s\in S$ and for $\mu_Y$-almost every $t\in T$, with
\[
\int_T
\left(
\int_S H(s,t)\,d\lambda_S(s)
\right)^2
d\mu_Y(t)
<
\infty.
\]

Finally, assume that the quadrature rule applied to $H(\cdot,t)$ is uniformly bounded in the sense that there exists
\[
M(t)\in L^2(\mu_Y)
\]
such that
\[
\int_S H(s,t)\,d\mu_X^{(m)}(s)
\le
M(t)
\]
for all sufficiently large $m$ and for $\mu_Y$-almost every $t$.

Then
\[
T_\beta^{(m)}X
\longrightarrow
T_\beta X
\quad \text{in } L^2(\mu_Y).
\]

\end{enumerate}

\end{theorem}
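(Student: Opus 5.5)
Part 1 follows directly from the definition of integration against a finite weighted sum of Dirac measures, i.e. from \eqref{eq:sum}. I would apply \eqref{eq:sum} with $\mu_X=\mu_X^{(m)}$ to the function $s\mapsto \beta(s,t)X(s)$ for each fixed $t$. This gives $(T_\beta^{(m)}X)(t)=\sum_i w_i^{(m)}\beta(s_i^{(m)},t)X(s_i^{(m)})$ with no approximation. The only point to note is that integration against $\delta_{s_i}$ evaluates the integrand pointwise, so $\beta$ and $X$ must be genuine functions (fixed representatives) rather than $L^2$ equivalence classes. I would state this convention explicitly.

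For part 2, I would define the error $D_m(t)=(T_\beta^{(m)}X)(t)-(T_\beta X)(t)$ and show that $\int_T D_m(t)^2\,d\mu_Y(t)\to 0$ via the dominated convergence theorem on $(T,\mu_Y)$. The argument has three steps.

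\textbf{Pointwise convergence.} For $\mu_Y$-a.e.\ $t$, $h_t\in\mathcal{F}$. Consistency of the quadrature rule on $\mathcal{F}$ then gives $D_m(t)\to 0$.

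\textbf{Domination.} For $\mu_Y$-a.e.\ $t$ and all sufficiently large $m$, the triangle inequality and $|h_t|\le H(\cdot,t)$ give
\[
|(T_\beta^{(m)}X)(t)|\le \int_S H(s,t)\,d\mu_X^{(m)}(s)\le M(t),
\qquad
|(T_\beta X)(t)|\le \int_S H(s,t)\,d\lambda_S(s)=:G(t).
\]
Hence $D_m(t)^2\le (M(t)+G(t))^2\le 2M(t)^2+2G(t)^2$. The hypotheses $M\in L^2(\mu_Y)$ and $\int_T G^2\,d\mu_Y<\infty$ make this bound integrable.

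\textbf{Conclusion.} Dominated convergence applied to $D_m^2$ along the tail $m\ge m_0$ yields $\|D_m\|_{L^2(\mu_Y)}\to 0$. The exceptional $\mu_Y$-null sets from the pointwise step, from the bound on $H$, and from the bound on $M$ are countably many, so their union is still null.

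The main obstacle is measurability, not the limit argument. Before applying dominated convergence I need $t\mapsto (T_\beta^{(m)}X)(t)$ and $t\mapsto (T_\beta X)(t)$ to be $\mathcal{B}_T$-measurable.

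For the discrete operator, this holds because it is a finite sum of the sections $t\mapsto\beta(s_i^{(m)},t)$. Sections of a jointly measurable function are measurable, so I would take $\beta$ jointly measurable.

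For the continuous operator, I would invoke Tonelli/Fubini on $S\times T$ with the $\sigma$-finite measures $\lambda_S$ and $\mu_Y$. The domination $|h_t|\le H(\cdot,t)$ with $G(t)<\infty$ a.e.\ makes $h_t$ Lebesgue integrable for a.e.\ $t$. Splitting into positive and negative parts, Tonelli then gives measurability of $t\mapsto\int_S\beta(s,t)X(s)\,ds$ on the a.e.\ set where it is finite.

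A secondary subtlety is that "for all sufficiently large $m$" in the $M$-bound might depend on $t$. I would read the hypothesis as a single threshold $m_0$ uniform in $t$. If the threshold varies with $t$, I would instead apply dominated convergence to $D_m^2\,\mathbf{1}\{m\ge m_0(t)\}$ and handle the remaining set separately. The uniform reading is the natural one, so I would adopt it.
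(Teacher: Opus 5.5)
Your proposal is correct and follows essentially the same route as the paper. Part (i) is immediate from \eqref{eq:sum}, and part (ii) combines a.e.\ pointwise convergence from quadrature consistency, the domination $|D_m|^2\le 2M^2+2G^2$, and dominated convergence on $(T,\mu_Y)$; your measurability and representative remarks are careful additions that the paper leaves implicit.

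Adopting a threshold $m_0$ uniform in $t$ is right, and the paper tacitly does the same. However, drop the fallback for a $t$-dependent threshold: the integrals of $D_m^2$ over the shrinking sets $\{t: m_0(t)>m\}$ need not vanish, and under that reading the conclusion can genuinely fail.
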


\begin{proof}

\textit{Part (i): Exact discrete representation.}

By the definition of integration with respect to a discrete measure, for any measurable function $f$ on $S$,
\[
\int_S f(s)\,d\mu_X^{(m)}(s)
=
\sum_{i=1}^{p_m}
w_i^{(m)} f(s_i^{(m)}).
\]
Taking $f(s)=\beta(s,t)X(s)$ for fixed $t\in T$ yields
\[
(T_\beta^{(m)}X)(t)
=
\sum_{i=1}^{p_m}
w_i^{(m)}
\beta(s_i^{(m)},t)
X(s_i^{(m)}),
\]
which establishes the exact discrete representation.

\medskip
\noindent\textit{Part (ii): Convergence to the continuous operator.}

Fix $t\in T$ in the full-measure set on which the quadrature consistency assumption holds for
\[
h_t(s)=\beta(s,t)X(s).
\]
By assumption,
\[
\sum_{i=1}^{p_m}
w_i^{(m)} h_t(s_i^{(m)})
\longrightarrow
\int_S h_t(s)\,d\lambda_S(s),
\]
which is precisely
\[
(T_\beta^{(m)}X)(t)
\longrightarrow
(T_\beta X)(t)
\qquad
\text{for $\mu_Y$-almost every } t\in T.
\tag{$\ast$}
\]

\medskip
\noindent\textit{Domination.}
The pointwise envelope $|\beta(s,t)X(s)| \le H(s,t)$ gives
\[
|(T_\beta X)(t)|
\le
\int_S H(s,t)\,d\lambda_S(s)
=:G(t),
\]
and the uniform quadrature bound gives, for all $m$ sufficiently large,
\[
|(T_\beta^{(m)}X)(t)|
\le
\int_S H(s,t)\,d\mu_X^{(m)}(s)
\le
M(t).
\]
Hence
\[
|(T_\beta^{(m)}X)(t)-(T_\beta X)(t)|
\le
M(t)+G(t),
\]
and therefore
\[
|(T_\beta^{(m)}X)(t)-(T_\beta X)(t)|^2
\le
2M(t)^2+2G(t)^2.
\]
By hypothesis, $M\in L^2(\mu_Y)$, so $M^2\in L^1(\mu_Y)$. Also,
\[
\int_T G(t)^2\,d\mu_Y(t)
=
\int_T
\left(
\int_S H(s,t)\,d\lambda_S(s)
\right)^2
d\mu_Y(t)
<
\infty.
\]
Thus, the squared difference is dominated by an $L^1(\mu_Y)$ function.

\medskip
\noindent\textit{Conclusion.}
Combining the pointwise convergence $(\ast)$ with the domination above, the dominated convergence theorem gives
\[
\int_T
\left|
(T_\beta^{(m)}X)(t)
-
(T_\beta X)(t)
\right|^2
d\mu_Y(t)
\longrightarrow 0.
\]
Therefore,
\[
T_\beta^{(m)}X
\longrightarrow
T_\beta X
\quad \text{in } L^2(\mu_Y).
\]

\end{proof}

\begin{corollary}[Connection to Multivariate Regression]
Combining Theorem 2 with the construction in Section 3.1 shows that the multivariate regression model
\[
Y = X B + E
\]
is precisely the discrete-measure representation of the operator $T_\beta$. In this representation, the matrix $B$ is the discrete-measure representation of the coefficient surface $\beta(s,t)$, with the input weights incorporated into its definition.
\end{corollary}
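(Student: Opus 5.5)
The plan is to make the corollary a direct consequence of Part (i) of Theorem 2 together with the vectorization of Section 3.1. The only new content is bookkeeping: we show that the array of exact discrete-measure integrals has exactly the entries of the product $XB$.

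First, we fix the input measure $\mu_X = \sum_{i=1}^{p} w_i\delta_{s_i}$. This is a single member $\mu_X^{(m)}$ of the sequence in Theorem 2, or equivalently case 4 of Theorem 1. We also fix an output grid $t_1,\dots,t_q$ with $\mu_Y = \sum_{j} v_j\delta_{t_j}$. By Part (i) of Theorem 2, for each observed predictor $X_k$, $k=1,\dots,n$, and each $t_j$,
\[
(T_\beta X_k)(t_j) = \sum_{i=1}^{p} w_i\,\beta(s_i,t_j)\,X_k(s_i),
\]
with no approximation error. We then define $X_{ki} = X_k(s_i)$, $Y_{kj} = Y_k(t_j)$, $E_{kj} = \varepsilon_k(t_j)$ and $B_{ij} = w_i\beta(s_i,t_j)$. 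The model \eqref{eq:model}, evaluated at the points where $\mu_Y$ has positive mass, then becomes $Y_{kj} = \sum_i X_{ki}B_{ij} + E_{kj}$. Stacking over $k$ and $j$ gives $Y = XB + E$ entrywise.

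Second, we argue the converse, which is the content of ``precisely'': every $B\in\mathbb{R}^{p\times q}$ arises in this way. Given $B$, we set $\beta(s_i,t_j) = B_{ij}/w_i$, which is possible because $w_i>0$, and $\beta = 0$ off the grid. This function lies in $L^2(\mu_X\times\mu_Y)$ automatically, since that product measure is finite and supported on $pq$ atoms. Hence $\beta \mapsto B$ is a bijection between $L^2(\mu_X\times\mu_Y)$ (modulo $\mu_X\times\mu_Y$-null sets) and $\mathbb{R}^{p\times q}$. This justifies calling $B$ \emph{the} discrete-measure representation of $\beta$, with the weights absorbed as stated.

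The main obstacle is conceptual rather than computational. We must check that equality in $L^2(\mu_Y)$ is the same as equality at the atoms $t_j$. This holds because $v_j>0$, so a function vanishes $\mu_Y$-a.e. exactly when it vanishes at each $t_j$. We must also note that values of $\beta$ off the grid are invisible under the discrete measures. Part (ii) of Theorem 2 is not needed for the identity itself. We would add only a remark that it links $B$ back to the continuous surface as the grid is refined.
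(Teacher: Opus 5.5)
Your argument is essentially the paper's: Part (i) of Theorem 2, combined with the vectorization $B_{ij}=w_i\beta(s_i,t_j)$ from Section 3 and stacking over the $n$ observations, is exactly how the paper arrives at $Y=XB+E$. One notational fix: when citing Part (i), write $T_\beta^{(m)}$, since in Theorem 2 $T_\beta$ denotes the Lebesgue-measure operator, for which your displayed identity does not hold. Your converse, that $\beta\mapsto B$ is a bijection from $L^2(\mu_X\times\mu_Y)$ modulo null sets onto $\mathbb{R}^{p\times q}$, is a correct addition the paper does not make. It holds provided the grid points $s_1,\dots,s_p$ and $t_1,\dots,t_q$ are distinct, which you assume implicitly when you count $pq$ atoms.
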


\begin{remark}[Discrete computation as exact evaluation]
Part (i) shows that weighted sums arising from discretized functional data are not approximations of continuous-measure integrals, but exact evaluations of the operator under a discrete measure. This contrasts with the standard view in functional data analysis, where discretization is typically treated as an approximation to an underlying continuous model (Hsing and Eubank, 2015).
\end{remark}

\begin{remark}[Discrete-to-continuous transition]
Part (ii) shows that discrete representations form a sequence of operators converging to the continuous operator as the grid is refined. Thus, approximation enters only when comparing operators defined under different measures, not within a fixed discrete-measure formulation.
\end{remark}

\begin{remark}[Connection to Neural Functional Models]
Theorem 2 also provides an operator-theoretic interpretation of neural functional models operating on discretized inputs. As discussed in Section 3.3, these architectures compute weighted sums over sampled functional observations and therefore act on the same underlying discrete-measure operator, differing primarily in parameterization and estimation strategy.
\end{remark}

\medskip

Theorem 2 separates two ideas that are often conflated. For any fixed grid, the weighted sums used in computation are exact evaluations of the operator under a discrete measure. Approximation arises only when comparing this discrete-measure operator to a continuous-measure operator as the grid is refined.
\section{Estimation under Discrete Measures}

We now consider estimation of the operator under the discrete-measure representation
introduced in Sections 3 and 4. In this setting, the functional regression model reduces
to a multivariate multiple regression model.

\begin{theorem}[Consistency under Discrete-Measure Representation]

Suppose that the predictor and response functions are observed on finite grids
$\{s_i\}_{i=1}^p$ and $\{t_j\}_{j=1}^q$, with corresponding discrete measures
\[
\mu_X = \sum_{i=1}^{p} w_i \delta_{s_i}, \qquad
\mu_Y = \sum_{j=1}^{q} v_j \delta_{t_j}.
\]

Let $\{(X_k, Y_k)\}_{k=1}^n$ be independent observations satisfying
\[
Y_k = X_k B + \varepsilon_k,
\]
where $B_{ij} = w_i \beta(s_i, t_j)$ and $\mathbb{E}[\varepsilon_k \mid X_k] = 0$.

Assume that $\mathbb{E}[X_k^\top X_k]$ is nonsingular and
$\mathbb{E}[\|\varepsilon_k\|^2] < \infty$.

Let
\[
\hat{B} = (X^\top X)^{-1} X^\top Y
\]
denote the least squares estimator.

Then
\[
\hat{B} \xrightarrow{P} B
\quad \text{as } n \to \infty.
\]

\end{theorem}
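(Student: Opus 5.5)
The argument is the classical law-of-large-numbers proof of OLS consistency, applied to the stacked model $Y = XB + E$. Here $X\in\mathbb{R}^{n\times p}$ has rows $X_k$ (row vectors in $\mathbb{R}^{1\times p}$) and $Y\in\mathbb{R}^{n\times q}$ has rows $Y_k$. Note that $X^\top X=\sum_k X_k^\top X_k$, so the hypothesis concerns the $p\times p$ matrix $\Sigma:=\mathbb{E}[X_k^\top X_k]$. We read the independent observations as i.i.d.; if they are only independent, we instead use a weak law under uniformly bounded second moments, and the argument is otherwise unchanged.

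First, substitute the model into the estimator. On the event that $X^\top X$ is invertible,
\[
\hat B - B = (X^\top X)^{-1}X^\top E = \Bigl(\tfrac1n\textstyle\sum_{k=1}^n X_k^\top X_k\Bigr)^{-1}\Bigl(\tfrac1n\textstyle\sum_{k=1}^n X_k^\top \varepsilon_k\Bigr).
\]
Second, apply the weak law of large numbers entrywise to each factor. This gives $\frac1n\sum_k X_k^\top X_k\xrightarrow{P}\Sigma$, which uses $\mathbb{E}\|X_k\|^2<\infty$, implicit in $\Sigma$ being finite. It also gives $\frac1n\sum_k X_k^\top\varepsilon_k\xrightarrow{P}\mathbb{E}[X_k^\top\varepsilon_k]$. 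By the tower property and $\mathbb{E}[\varepsilon_k\mid X_k]=0$, this limit is $\mathbb{E}\bigl[X_k^\top\,\mathbb{E}[\varepsilon_k\mid X_k]\bigr]=0$.

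Third, because $\Sigma$ is nonsingular and matrix inversion is continuous on a neighbourhood of $\Sigma$, the probability that $X^\top X$ is singular tends to $0$. The continuous mapping theorem then gives $(\frac1n X^\top X)^{-1}\xrightarrow{P}\Sigma^{-1}$. Slutsky's lemma yields $\hat B-B\xrightarrow{P}\Sigma^{-1}\cdot 0=0$, and off the vanishing singular event $\hat B$ may be defined arbitrarily.

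The main obstacle is integrability of the cross term $X_k^\top\varepsilon_k$. The WLLN needs $\mathbb{E}\|X_k^\top\varepsilon_k\|<\infty$, and this is what makes the zero-mean argument legitimate. By Cauchy--Schwarz,
\[
\mathbb{E}\|X_k^\top\varepsilon_k\|\le(\mathbb{E}\|X_k\|^2)^{1/2}(\mathbb{E}\|\varepsilon_k\|^2)^{1/2}<\infty,
\]
using finiteness of $\Sigma$ and the assumed $\mathbb{E}\|\varepsilon_k\|^2<\infty$. I would make these implicit second-moment and identical-distribution assumptions explicit at the start of the proof. The conclusion is a statement about the discrete-measure representation $B_{ij}=w_i\beta(s_i,t_j)$. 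Consistency for $\beta$ itself on the grid follows by dividing by the known weights $w_i>0$. No refinement of the grid is involved, so Theorem 2 is not needed here.
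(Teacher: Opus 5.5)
Your proposal is correct and follows the paper's proof: you substitute $Y=XB+E$ to get $\hat B-B=(\tfrac1n X^\top X)^{-1}(\tfrac1n X^\top E)$, apply the law of large numbers to both factors, and conclude by continuous mapping and Slutsky. Along the way you make explicit what the paper leaves implicit: the i.i.d.\ reading, the Cauchy--Schwarz integrability of $X_k^\top\varepsilon_k$ that legitimizes the tower-property step, and the vanishing probability that $X^\top X$ is singular.

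One point is loose: your parenthetical about merely independent observations. There, a Chebyshev-type weak law needs uniformly bounded second moments of the summands $X_k^\top X_k$ and $X_k^\top\varepsilon_k$, which amounts to fourth moments and goes beyond the stated hypotheses, not just second moments of $X_k$ and $\varepsilon_k$. This does not affect your main i.i.d.\ argument.
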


\begin{proof}
Write the least squares estimator as
\[
\hat{B}
=
(X^\top X)^{-1}X^\top Y.
\]
Using the model \(Y = XB + E\), we have
\[
\hat{B}
=
(X^\top X)^{-1}X^\top (XB+E)
=
B + (X^\top X)^{-1}X^\top E.
\]
Equivalently,
\[
\hat{B} - B
=
\left(\frac{1}{n}X^\top X\right)^{-1}
\left(\frac{1}{n}X^\top E\right).
\]
By the law of large numbers,
\[
\frac{1}{n}X^\top X \xrightarrow{p} \mathbb{E}[X_k^\top X_k],
\]
which is nonsingular by assumption. Also, since
\(\mathbb{E}[\varepsilon_k\mid X_k]=0\) and
\(\mathbb{E}\|\varepsilon_k\|^2<\infty\),
\[
\frac{1}{n}X^\top E \xrightarrow{P} 0.
\]
Therefore,
\[
\hat{B} - B \xrightarrow{P} 0,
\]
and hence
\[
\hat{B}\xrightarrow{P}B.
\]
\end{proof}

\begin{remark}[Interpretation]
Theorem 3 shows that estimation under the discrete-measure formulation reduces to
standard multivariate regression. In particular, the estimator $\hat{B}$ consistently
recovers the discrete-measure representation of the operator $\beta$.
\end{remark}

\begin{remark}[Connection to the Framework]
Combining Theorems 1--3, we see that regression models across functional, multivariate,
and neural settings correspond to estimation of the same operator under different measures,
and that standard estimation procedures behave consistently under this representation.
\end{remark}

\section{Simulation Study}

We illustrate the implications of the unified operator framework through a set of controlled simulation studies. The purpose of this section is not to compare methods for predictive performance, but to demonstrate how estimation behaves under the discrete-measure representation developed in Sections 3--5.

All results are interpreted in operator-theoretic terms. In particular, we distinguish between discretization error arising from replacing the continuous measure with a discrete measure, and estimation error arising from finite samples and conditioning of the design.

\subsection{Simulation Setup}

Functional predictors \(X(s)\) are generated on \(S=[0,1]\) using a truncated Karhunen--Lo\`eve expansion. The response is generated according to the operator model
\[
Y(t)
=
\int_0^1 \beta(s,t)\,X(s)\,ds
+
\varepsilon(t),
\]
with \(\varepsilon(t)\) Gaussian noise calibrated to a fixed signal-to-noise ratio.

The functions are observed on grids \(\{s_i\}_{i=1}^p\) and \(\{t_j\}_{j=1}^p\), corresponding to the discrete measure
\[
\mu
=
\frac{1}{p}\sum_{i=1}^p \delta_{s_i}.
\]

Vectorized multivariate regression is then used to estimate the discrete-measure operator \(T_\beta^\mu\).

We report three quantities:
\begin{itemize}
\item discretization error (DiscErr): the difference between \(T_\beta^\mu\) and a high-resolution reference operator,
\item operator error (OpErr): \(\|T_{\hat{\beta}}X - T_\beta X\|\) evaluated on test functions,
\item coefficient error (ISE): the squared \(L^2(\mu\times\mu)\) distance between \(\hat{\beta}\) and \(\beta\).
\end{itemize}
For each experimental configuration, the simulation was repeated over 100 independent Monte Carlo replications, and the reported quantities correspond to averages across these replications. This averaging reduces sensitivity to individual realizations and allows the reported trends to reflect systematic behavior of the estimators under varying discretization, sample size, and conditioning regimes.
\subsection{Discretization and Operator Convergence}

We first examine the effect of grid density. As shown in Figure~\ref{fig:disc_error}, the discretization error decreases at the expected rate as \(p\) increases, closely following a \(p^{-2}\) scaling consistent with midpoint quadrature for smooth integrands. A log--log fit gives an empirical slope of approximately \(-2.0\).

In contrast, the operator error remains approximately constant once the grid becomes sufficiently dense. This behavior reflects the separation between discretization error and estimation error: as the grid is refined, the discrete operator \(T_\beta^\mu\) approaches the continuous operator \(T_\beta\), but the finite-sample estimation error becomes dominant.

These results are consistent with Theorem 2. The discrete representation converges to the continuous operator as the measure becomes increasingly fine, but beyond a certain resolution, additional grid points do not improve performance unless the sample size is increased.

\begin{figure}[h]
\centering
\includegraphics[width=0.75\textwidth]{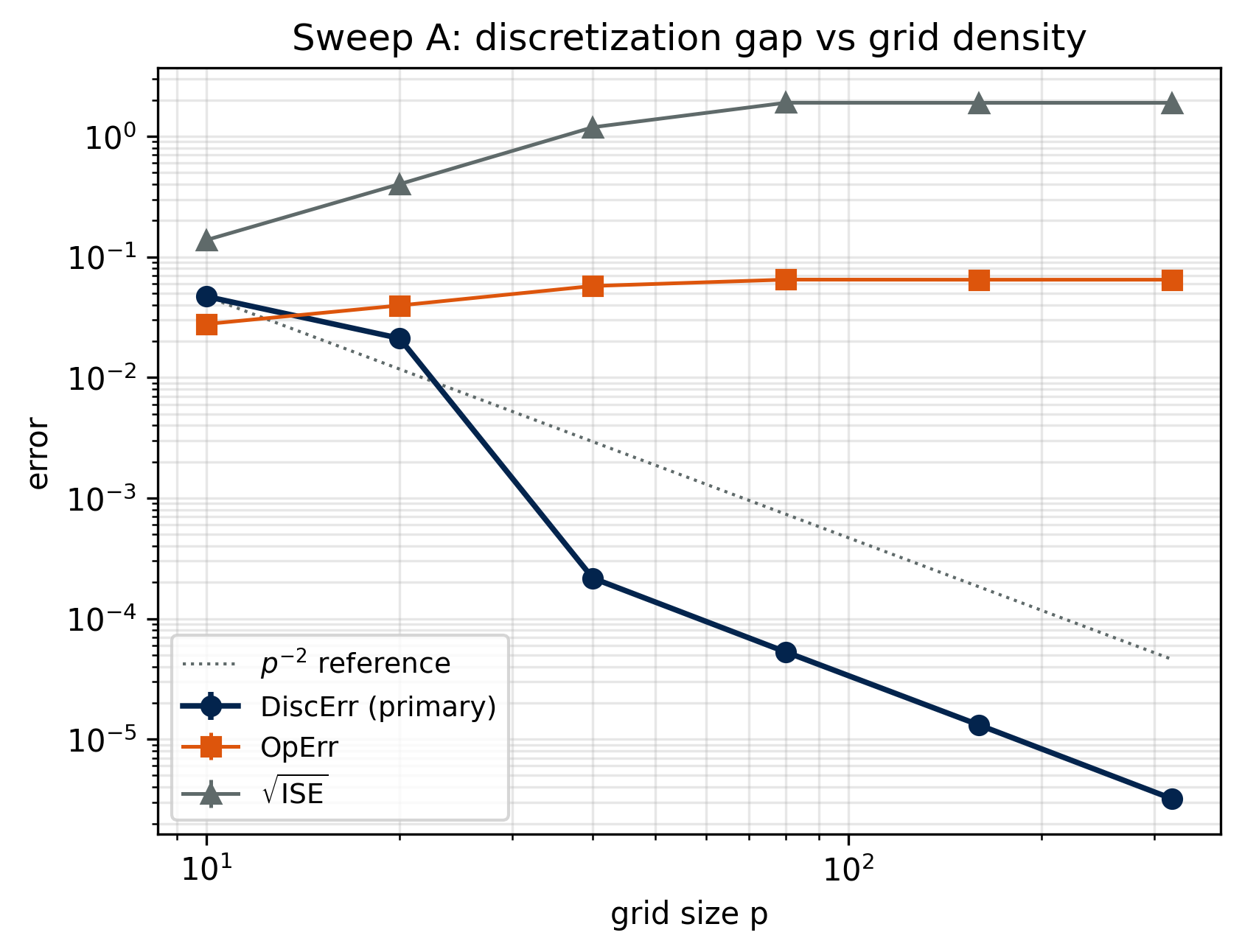}
\caption{Discretization error, operator error, and coefficient error as functions of grid size \(p\). The discretization error decreases at the expected \(p^{-2}\) rate, while operator error stabilizes due to finite-sample estimation error.}
\label{fig:disc_error}
\end{figure}

\begin{table}[h]
\centering
\caption{Summary of discretization and operator error as a function of grid size \(p\).}
\begin{tabular}{ccccc}
\hline
\(p\) & DiscErr & OpErr & RMSE & \(\kappa(X^\top X)\) \\
\hline
10  & 0.0469 & 0.0277 & 0.1956 & 5.3e1 \\
20  & 0.0211 & 0.0395 & 0.1975 & 2.5e2 \\
40  & 0.00022 & 0.0573 & 0.2017 & 1.2e3 \\
80  & 5.3e-05 & 0.0648 & 0.2041 & 9.5e18 \\
160 & 1.3e-05 & 0.0646 & 0.2041 & 2.4e19 \\
320 & 3.2e-06 & 0.0647 & 0.2041 & 1.4e20 \\
\hline
\end{tabular}
\label{tab:disc_summary}
\end{table}

\subsection{Estimation under Increasing Sample Size}

We next examine the effect of sample size at a fixed grid. As shown in Figure~\ref{fig:sample_size}, the operator error decreases at approximately the \(n^{-1/2}\) rate predicted by Theorem 3. A log--log fit gives an empirical slope close to \(-0.5\). In contrast, the prediction error (RMSE) decreases initially but levels off at the noise floor.

This behavior is consistent with the decomposition
\[
\mathbb{E}[\mathrm{RMSE}^2]
\approx
\mathrm{OpErr}^2 + \sigma^2,
\]
which is verified numerically to high precision. The results confirm that operator error captures the estimation component, while prediction error reflects both estimation and irreducible noise.

\begin{figure}[h]
\centering
\includegraphics[width=0.75\textwidth]{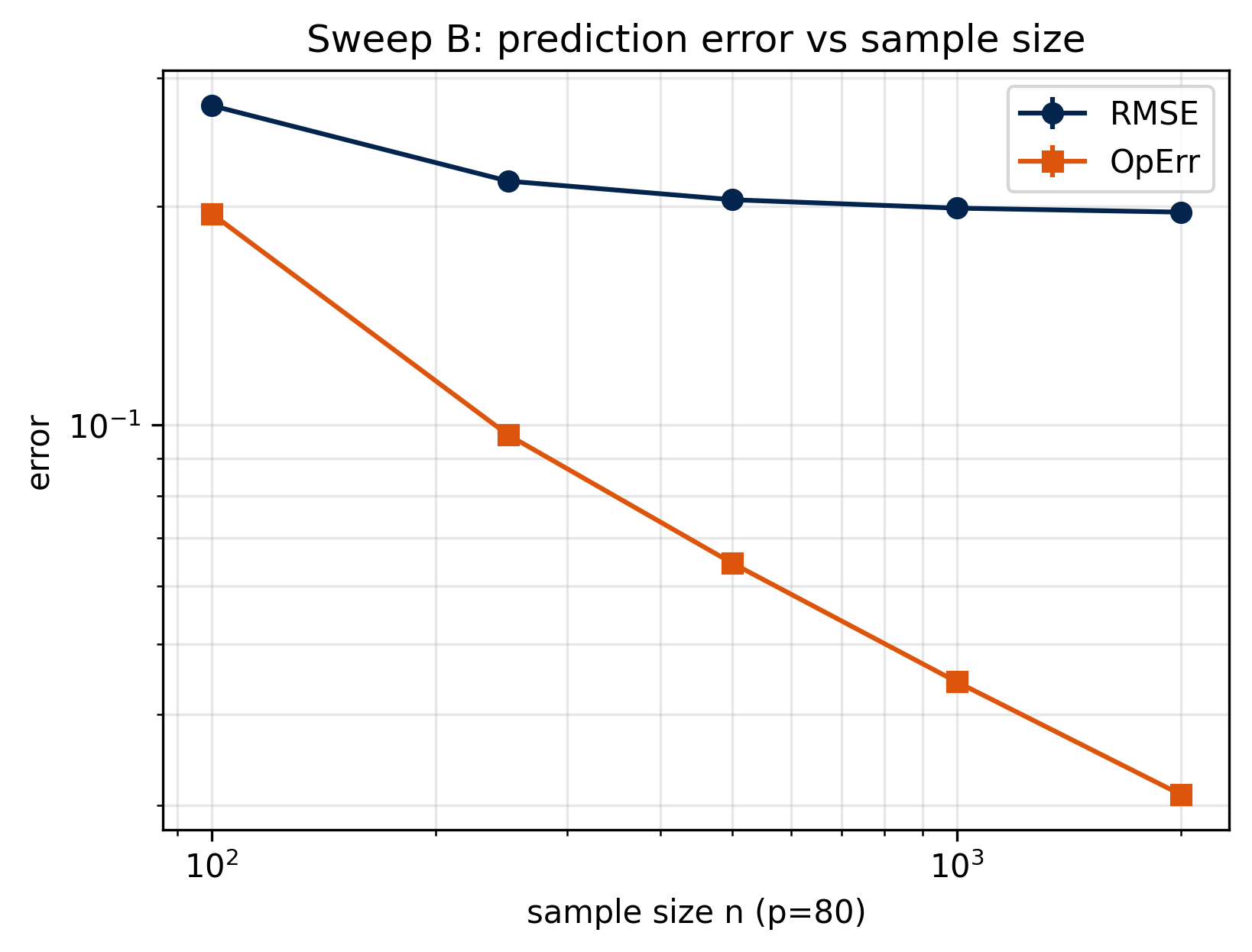}
\caption{Prediction error (RMSE) and operator error as functions of sample size \(n\) at fixed grid size. Operator error decreases at approximately the \(n^{-1/2}\) rate, while prediction error approaches the noise floor.}
\label{fig:sample_size}
\end{figure}

\subsection{Effect of Conditioning and Regularization}

Finally, we examine the role of conditioning. As the grid size \(p\) increases relative to the sample size \(n\), the condition number of \(X^\top X\) grows rapidly due to the strong correlation among neighboring functional observations induced by the smooth Karhunen--Lo\`eve representation. At the same time, the discrete operator more closely approximates the underlying continuous operator as the grid is refined.

These competing effects produce the familiar regression phenomenon in which coefficient estimation becomes increasingly unstable under severe multicollinearity, even though predictive or operator-level performance may remain comparatively accurate. Thus, poor conditioning does not necessarily imply large operator error.

As shown in Figure~\ref{fig:conditioning}, ridge regression provides a more stable estimator by regularizing the empirical Gram operator while still targeting the same underlying operator \(T_\beta^\mu\). Notably, at the largest condition numbers, the OLS operator error can return to levels comparable to ridge regression. This reflects implicit regularization arising from numerical rank truncation in the pseudo-inverse computation: extremely small singular directions are effectively discarded under floating-point precision. In the present simulations, these directions correspond primarily to high-frequency low-variance components induced by the smooth Karhunen--Lo\`eve structure of the predictor process. Ridge regularization and numerical rank truncation therefore represent different mechanisms that can produce similar operator-level behavior while still targeting the same underlying operator.

\begin{figure}[h]
\centering
\includegraphics[width=0.75\textwidth]{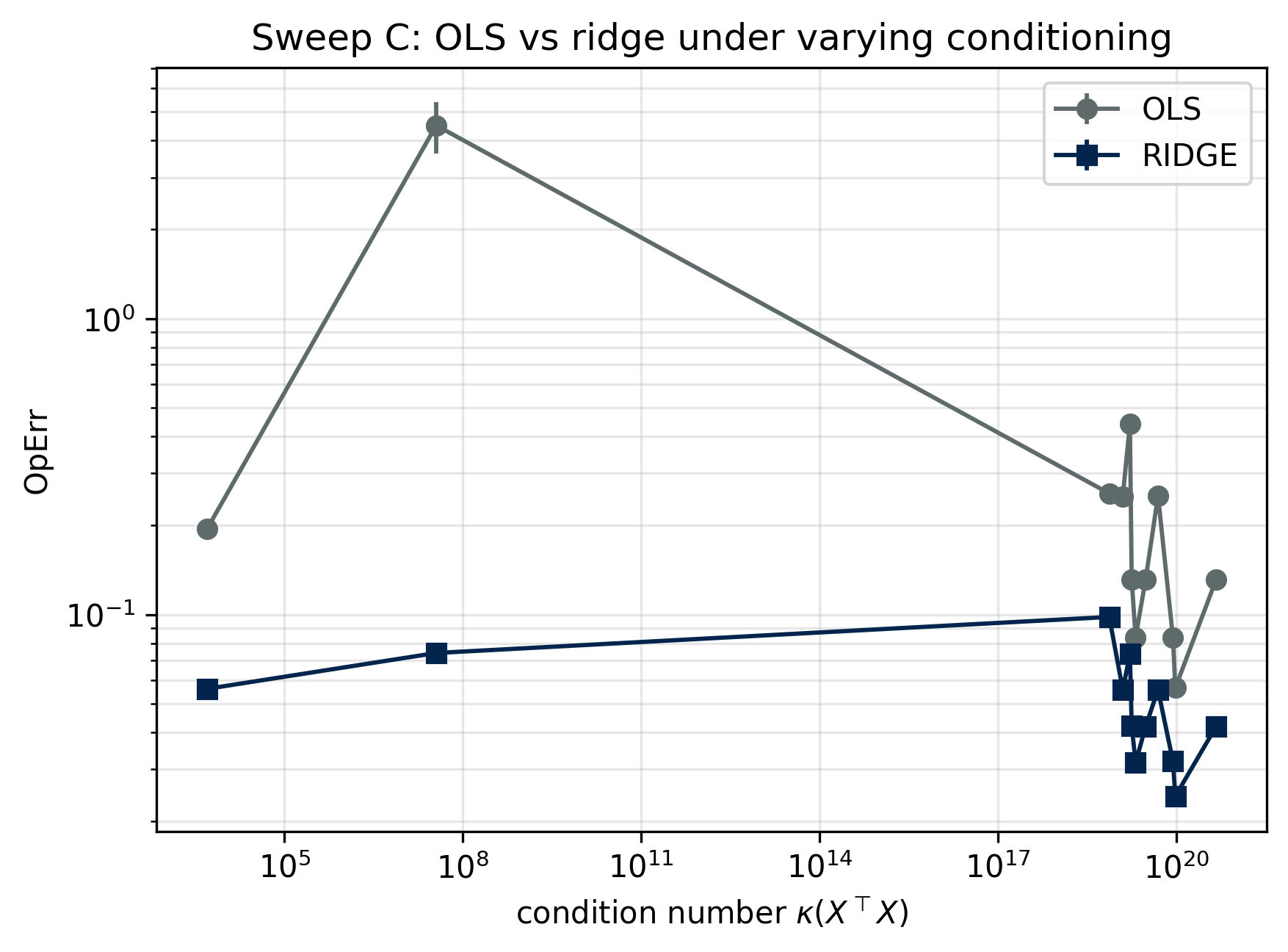}
\caption{Operator error for OLS and ridge regression as a function of the condition number of the design matrix. Although coefficient instability increases with conditioning, operator-level performance can remain comparatively stable.}
\label{fig:conditioning}
\end{figure}

\subsection{Coefficient Error versus Operator Error}

An important observation is that coefficient error (ISE) and operator error behave differently. In regimes with poor conditioning, the estimated coefficient matrix \(\hat{B}\) may deviate substantially from the true discrete coefficient matrix, resulting in large ISE. However, the corresponding operator error remains relatively stable.

This reflects the fact that perturbations in \(\hat{B}\) may have limited impact on the action of the operator on typical inputs. In particular, accurate recovery of the coefficient surface is not necessary for accurate operator evaluation. This distinction underscores the importance of interpreting results at the operator level rather than solely in terms of coefficient estimation.

\begin{figure}[h]
\centering
\includegraphics[width=0.85\textwidth]{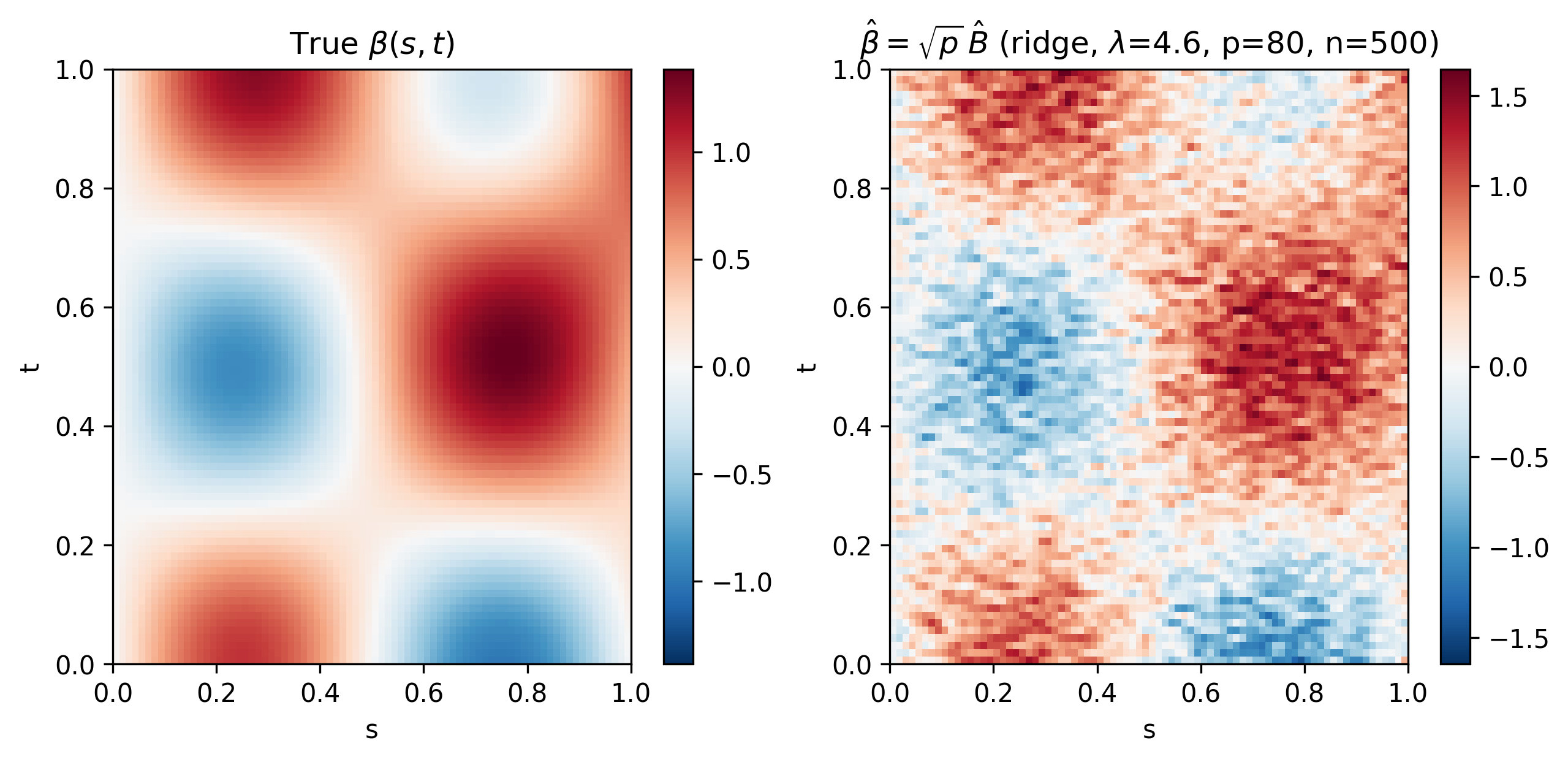}
\caption{True coefficient surface \(\beta(s,t)\) and estimated discrete-measure representation \(\hat{\beta}(s_i,t_j)\). The global structure is recovered despite finite-sample noise.}
\label{fig:beta_heatmap}
\end{figure}

\begin{figure}[h]
\centering
\includegraphics[width=0.9\textwidth]{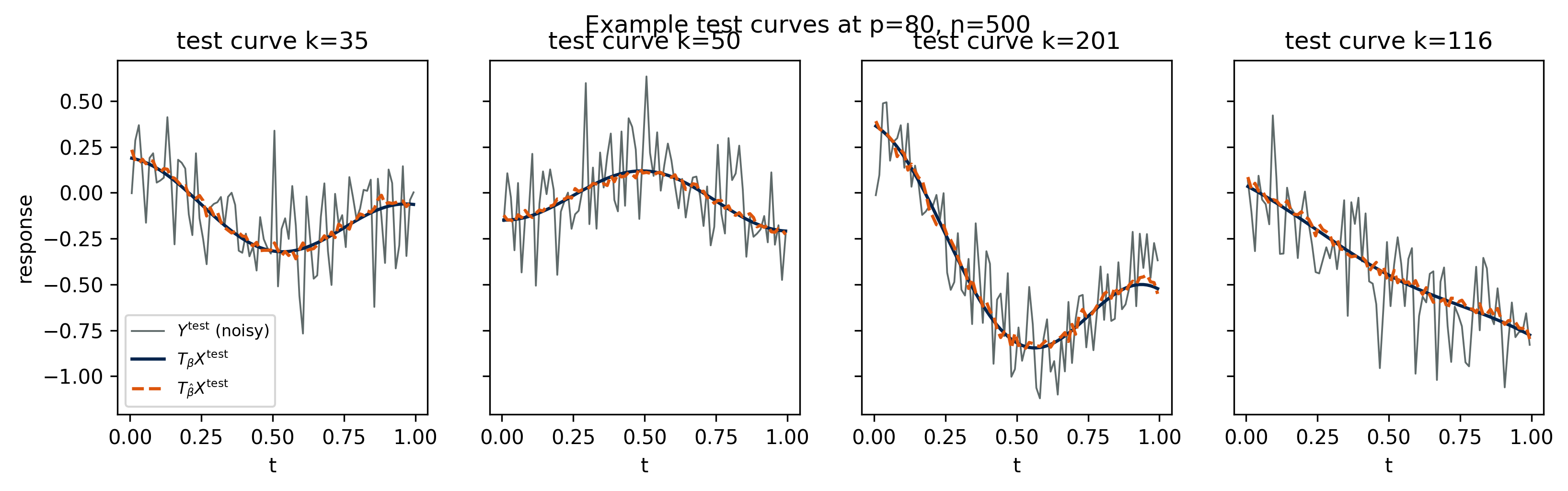}
\caption{Example test curves showing noisy observations, true operator outputs, and predicted responses. The estimated operator captures the underlying structure despite noise.}
\label{fig:example_curves}
\end{figure}

\subsection{Discussion}

The simulation results support the theoretical framework developed in this paper. In particular:

\begin{itemize}
\item Discrete representations correspond to exact evaluations of the operator under a discrete measure, and converge to the continuous operator as the grid is refined.
\item When the grid is sufficiently dense, estimation error rather than discretization error determines performance.
\item Estimation error decreases at the standard \(n^{-1/2}\) rate, consistent with classical multivariate regression theory.
\item Differences between estimators arise from conditioning and regularization, not from differences in the underlying regression model.
\item Accurate operator estimation does not require accurate recovery of the full coefficient surface.
\end{itemize}

Overall, the simulations demonstrate how the measure-theoretic perspective clarifies both the relationship between functional and multivariate regression and the behavior of estimation under discretization.

\section{Discussion}

We have proposed a unified operator framework for functional, multivariate, and scalar regression based on integral operators defined with respect to general measures. Within this framework, classical regression models arise as special cases corresponding to different choices of input and output measures.

The central contribution of this work is to reinterpret discretization not as an approximation to a continuous model, but as a specification of the measure under which the regression operator is evaluated. Under this perspective, weighted sums over discretized observations are exact evaluations of the operator under a discrete measure. This viewpoint provides a unified interpretation of functional regression, multivariate regression, and related approaches.

The theoretical results establish three key points. First, the standard regression taxonomy can be expressed as a single operator under different measure choices. Second, discrete representations correspond to exact operator evaluations and converge to the continuous operator as the observation grid is refined. Third, estimation under the discrete-measure formulation reduces to standard multivariate regression, with statistical properties governed by classical results.

The simulation study illustrates these properties empirically. In particular, it shows that discretization error behaves as predicted by numerical integration theory, that estimation error follows standard $n^{-1/2}$ scaling, and that differences in performance arise from conditioning and regularization rather than from differences in model structure. An important implication is that accurate operator estimation does not require precise recovery of the full coefficient surface, highlighting the distinction between coefficient estimation and operator evaluation.

Overall, this work suggests that many commonly used regression methods can be understood as estimating the same underlying operator under different measure specifications. This perspective clarifies the relationship between functional and multivariate regression and provides a meaningful framework for interpreting the effects of discretization, conditioning, and estimation strategy.

Several directions for future work remain. These include extending the framework to nonlinear operator models, developing inference procedures under general measures, and exploring applications to more complex data structures. More broadly, the measure-theoretic viewpoint may provide a useful lens for understanding a wider class of statistical learning methods that operate on discretized representations of continuous objects.

\appendix
\section{Technical Details for Simulation Study}

This appendix provides a formal description of the data-generating process, discretization, estimation procedures, and error metrics used in the simulation study. The presentation is organized to reflect the operator–measure framework developed in the main text.

\subsection{Operator Formulation}

Let $S = T = [0,1]$, and let $\beta : S \times T \to \mathbb{R}$ define a linear operator
\[
(T_\beta X)(t) = \int_0^1 \beta(s,t)\, X(s)\, ds,
\]
mapping functions in $L^2([0,1])$ to functions in $L^2([0,1])$.

The simulation treats regression as an operator estimation problem: the goal is to estimate the mapping $T_\beta$, rather than the coefficient surface $\beta(s,t)$ in isolation.

\subsection{Discretization as a Measure}

Let $\{s_i\}_{i=1}^p$ be a grid on $[0,1]$. We define a discrete measure
\[
\mu = \sum_{i=1}^p w_i \delta_{s_i}.
\]

In this simulation, we take
\[
s_i = \frac{i - 0.5}{p}, \qquad w_i = \frac{1}{p},
\]
so that
\[
\mu = \frac{1}{p} \sum_{i=1}^p \delta_{s_i},
\]
which is the uniform discrete probability measure on $[0,1]$.

\paragraph{Why $w_i = 1/p$.}
The choice $w_i = 1/p$ reflects the modeling assumption that each grid point represents an equal portion of the domain. Under this interpretation, the integral operator
\[
(T_\beta^\mu X)(t) = \sum_{i=1}^p w_i \beta(s_i,t)\, X(s_i)
\]
is the exact evaluation of the operator with respect to the discrete measure $\mu$.

It is important to distinguish this from classical numerical integration. In standard quadrature, weights such as $1/p$ arise as approximations to the Lebesgue integral. In contrast, here the weights define the measure itself. The operator $T_\beta^\mu$ is not an approximation of $T_\beta$ under Lebesgue measure; it is the exact operator under the discrete measure $\mu$.

\paragraph{Continuous vs discrete operators.}
The continuous operator is
\[
(T_\beta X)(t) = \int_0^1 \beta(s,t)\, X(s)\, ds,
\]
while the discrete operator is
\[
(T_\beta^\mu X)(t) = \sum_{i=1}^p w_i \beta(s_i,t)\, X(s_i).
\]

Approximation arises only when comparing $T_\beta^\mu$ to $T_\beta$. As the grid becomes dense, $\mu$ provides an increasingly fine discretization of the domain, and $T_\beta^\mu$ approaches $T_\beta$.

\paragraph{Generalization.}
The framework allows arbitrary discrete measures
\[
\mu = \sum_{i=1}^p w_i \delta_{s_i}, \qquad w_i > 0.
\]
Different choices of $(s_i, w_i)$ correspond to different models. Uniform midpoint weights are used here for simplicity and to produce well-understood convergence behavior.

\subsection{Data-Generating Process}

The data-generating process is designed to produce smooth functional predictors with a controlled covariance structure. This allows the simulation to separate the effects of discretization, finite-sample estimation, and conditioning while keeping the true operator known.

Functional predictors are generated via a truncated Karhunen--Lo\`eve expansion
\[
X(s)
=
\sum_{k=1}^K \xi_k \phi_k(s),
\]
where \(\{\phi_k\}_{k=1}^K\) are orthonormal basis functions on \([0,1]\). In the simulations, these basis functions are chosen to be smooth sinusoidal functions so that the resulting trajectories resemble typical smooth functional observations. The random coefficients are generated independently as
\[
\xi_k \sim \mathcal{N}(0,\lambda_k),
\qquad
\lambda_k=k^{-2}.
\]
The decreasing eigenvalues \(\lambda_k=k^{-2}\) ensure that lower-order basis functions explain most of the variability, while higher-order components contribute progressively less. This produces smooth functional predictors and induces the strong correlation among nearby grid values that is typical of functional data.

The response is generated according to
\[
Y(t)
=
(T_\beta X)(t)+\varepsilon(t),
\]
with \(\varepsilon(t)\) independent Gaussian noise.
Functional predictors are generated via a truncated Karhunen--Lo\`eve expansion
\[
X(s) = \sum_{k=1}^K \xi_k \phi_k(s),
\]
where $\{\phi_k\}$ are orthonormal basis functions and $\xi_k \sim \mathcal{N}(0,\lambda_k)$ with $\lambda_k = k^{-2}$.

\subsection{Vectorized Regression and Scaling}

Let $X_k(s)$ denote the $k$-th sample. Define the design matrix
\[
X_{\text{mat}}[k,i] = \sqrt{w_i}\, X_k(s_i).
\]

\paragraph{Role of the scaling.}
The factor $\sqrt{w_i}$ ensures that the empirical Gram matrix
\[
\frac{1}{n} X_{\text{mat}}^\top X_{\text{mat}}
\]
approximates the population Gram operator under the discrete $L^2(\mu)$ inner product
\[
\langle f, g \rangle_{L^2(\mu)} = \sum_i w_i f(s_i) g(s_i).
\]

Without this scaling, the regression problem would be improperly weighted and would not correspond to the operator under $\mu$.

\subsection{Recovery of the Coefficient Surface}

The regression model is
\[
Y = X_{\text{mat}} B + E.
\]

Because the predictors are scaled by $\sqrt{w_i}$, the estimated coefficient matrix $\hat{B}$ corresponds to a scaled version of the kernel. The coefficient surface is recovered via
\[
\hat{\beta}(s_i,t_j) = \frac{1}{\sqrt{w_i}} \hat{B}_{ij}.
\]

For the uniform measure $w_i = 1/p$, this reduces to
\[
\hat{\beta}(s_i,t_j) = \sqrt{p}\, \hat{B}_{ij}.
\]

This scaling ensures that $\hat{\beta}$ is expressed on the same scale as the true coefficient surface.

\subsection{Error Metrics}

We define three distinct error measures.

\subsubsection{Coefficient Error (ISE)}

\[
\text{ISE} = \sum_{i,j} w_i w_j \left( \hat{\beta}(s_i,t_j) - \beta(s_i,t_j) \right)^2.
\]

This measures recovery of the coefficient surface in the discrete $L^2(\mu \times \mu)$ norm.

\subsubsection{Operator Error (OpErr)}

Define noiseless outputs
\[
Y_{\text{signal}}(t_j) = \sum_i w_i \beta(s_i,t_j) X(s_i),
\]
\[
\hat{Y}_{\text{signal}}(t_j) = \sum_i w_i \hat{\beta}(s_i,t_j) X(s_i).
\]

Then
\[
\text{OpErr}^2 = \frac{1}{n_{\text{test}}} \sum_k \sum_j w_j
\left( \hat{Y}_{\text{signal},k}(t_j) - Y_{\text{signal},k}(t_j) \right)^2.
\]

This measures the accuracy of the operator action.

\paragraph{Interpretation.}
ISE measures accuracy of $\beta$, whereas OpErr measures accuracy of $T_\beta$. These quantities need not behave similarly, since the operator depends only on components of $\beta$ aligned with the input functions.

\subsubsection{Discretization Error (DiscErr)}

Using a high-resolution reference grid,
\[
\text{DiscErr}^2 =
\frac{1}{n_{\text{test}}} \sum_k \sum_j \frac{1}{p_{\text{ref}}}
\left( (T_\beta^{\mu_{\text{ref}}} X_k)(t_j) - (T_\beta^\mu X_k)(t_j) \right)^2.
\]

This isolates the difference between discrete and continuous operator representations.

\subsubsection{Prediction Error (RMSE)}

\[
\text{RMSE}^2 = \frac{1}{n_{\text{test}} p} \sum_{k,j}
\left( Y_k(t_j) - \hat{Y}_k(t_j) \right)^2.
\]

Taking expectations,
\[
\mathbb{E}[\text{RMSE}^2] \approx \text{OpErr}^2 + \sigma^2.
\]

\paragraph{Interpretation.}
RMSE reflects both estimation error and irreducible noise, whereas OpErr isolates the estimation component.

\subsection{Summary}

The simulation distinguishes three sources of error:

\begin{itemize}
\item discretization error (choice of measure),
\item estimation error (finite samples and conditioning),
\item observation noise.
\end{itemize}

This separation is central to the interpretation of results in Section 6.

\end{document}